\documentclass[letterpaper, 10 pt, conference]{ieeeconf}
\IEEEoverridecommandlockouts
\usepackage{cancel}

\usepackage{caption}
\usepackage{subcaption}
\usepackage{graphics} 
\DeclareGraphicsExtensions{ .pdf, .png, .jpg}

\usepackage{mathtools}
\usepackage{physics2}
\usephysicsmodule{ab, diagmat, xmat}

\newcommand{\bmat}[1]{\begin{bmatrix}#1\end{bmatrix}}

\newcommand{\norm}[1]{\ab\|#1\|} 

\usepackage{times}
\usepackage[separate-uncertainty=true]{siunitx}
\usepackage{amssymb}
\usepackage{amsmath}
\usepackage{accents}
\usepackage{amsfonts}
\usepackage[bb=dsserif]{mathalpha}
\usepackage[mathscr]{euscript}
\usepackage{mathrsfs}
\newtheorem{thm}{Theorem}
\newtheorem{lem}{Lemma}
\newtheorem{prop}{Proposition}

\newtheorem{defn}{Definition}

\usepackage{url}

\usepackage{bookmark}

\let\labelindent\relax
\usepackage[inline]{enumitem}

\usepackage{cleveref}
\crefname{assume}{}{}
\Crefname{assume}{}{}
\crefname{figure}{Fig.}{Figs.}
\Crefname{figure}{Fig.}{Figs.}
\crefname{equation}{}{}
\Crefname{equation}{}{}
\crefname{lem}{Lemma}{Lemmas}
\Crefname{lem}{Lemma}{Lemmas}
\crefname{rem}{Remark}{Remarks}
\Crefname{rem}{Remark}{Remarks}
\crefname{thm}{Theorem}{Theorems}
\Crefname{thm}{Theorem}{Theorems}
\crefname{cor}{Corollary}{Corollaries}
\Crefname{cor}{Corollary}{Corollaries}
\crefname{prop}{Proposition}{Propositions}
\Crefname{prop}{Proposition}{Propositions}
\crefname{defn}{Definition}{Definitions}
\Crefname{defn}{Definition}{Definitions}

\makeatletter
\patchcmd{\@IEEEyesnumber}
  {\stepcounter}
  {\refstepcounter}
  {}{}
\patchcmd{\@@IEEEeqnarray}
  {\stepcounter}
  {\refstepcounter}
  {}{}
\patchcmd{\@@IEEEeqnarraycr}
  {\stepcounter{IEEEsubequation}}
  {\refstepcounter{IEEEsubequation}}
  {}{}
\patchcmd{\@@IEEEeqnarraycr}
  {\stepcounter{IEEEsubequation}}
  {\refstepcounter{IEEEsubequation}}
  {}{}
\patchcmd{\@@IEEEeqnarraycr}
  {\stepcounter{IEEEequation}}
  {\refstepcounter{IEEEequation}}
  {}{}
\patchcmd{\@@IEEEeqnarraycr}
  {\stepcounter{IEEEequation}}
  {\refstepcounter{IEEEequation}}
  {}{}
\makeatother

\usepackage{nicematrix}
\NiceMatrixOptions{margin}

\usepackage{booktabs} 
\usepackage{multirow}
\usepackage{makecell}

\usepackage{array}
\newcommand{\PreserveBackslash}[1]{\let\temp=\\#1\let\\=\temp}
\newcolumntype{C}[1]{>{\PreserveBackslash\centering}m{#1}}
\newcolumntype{R}[1]{>{\PreserveBackslash\raggedleft}m{#1}}
\newcolumntype{L}[1]{>{\PreserveBackslash\raggedright}m{#1}}
\usepackage{nicematrix}
\usepackage{arydshln}

\usepackage{tikz}
\usepackage{circuitikz}
\usetikzlibrary{calc,shapes,arrows,arrows.meta,quotes,positioning,angles,quotes, decorations.pathmorphing, patterns, decorations.markings}

\usepackage{todonotes}
\setuptodonotes{inline, inlinewidth=\linewidth}

\newcommand{\dd}[1]{\mathop{}\!\mathrm{d}#1}
\newcommand{\diff}[2]{ \frac{\dd{#1}}{\dd{#2}} }
\newcommand{\T}{\mathsf{T}}

\title{\LARGE \bf
Controller Design for Structured State-space Models via Contraction Theory
}

\author{%
    Muhammad Zakwan$^{\dagger}$, Vaibhav Gupta$^{\dagger}$, Alireza Karimi, Efe C. Balta, Giancarlo Ferrari-Trecate
    \thanks{*This work is funded by the Swiss National Science Foundation (grant no. 200021-204962), NCCR Automation, a National Centre of Competence in Research, funded by the Swiss National Science Foundation (grant no. 51NF40\_225155), and the NECON project (grant no. 200021-219431).}
    \thanks{V. Gupta, G. Ferrari-Trecate, and A. Karimi are with the Laboratoire d’Automatique, EPFL, 1015 Lausanne, Switzerland.}%
    \thanks{M. Zakwan and E. C. Balta are with Control \& Automation Group, Inspire AG, 8005 Zürich, Switzerland \& with Automatic Control Laboratory (IfA), ETH Zürich, 8092 Zürich, Switzerland.}%
    \thanks{$^\dagger$ M. Zakwan and V. Gupta contributed equally to this work.} %
    \thanks{Corresponding author: M. Zakwan, {\small muhammad.zakwan@inspire.ch}}
}

\begin{document}

\maketitle
\thispagestyle{empty}
\pagestyle{empty}

\begin{abstract}
This paper presents an indirect data-driven output feedback controller synthesis for nonlinear systems, leveraging Structured State-space Models (SSMs) as surrogate models. SSMs have emerged as a compelling alternative in modelling time-series data and dynamical systems. They can capture long-term dependencies while maintaining linear computational complexity with respect to the sequence length, in comparison to the quadratic complexity of Transformer-based architectures. The contributions of this work are threefold. We provide the first analysis of controllability and observability of SSMs, which leads to scalable control design via Linear Matrix Inequalities (LMIs) that leverage contraction theory. Moreover, a separation principle for SSMs is established, enabling the independent design of observers and state-feedback controllers while preserving the exponential stability of the closed-loop system. The effectiveness of the proposed framework is demonstrated through a numerical example, showcasing nonlinear system identification and the synthesis of an output feedback controller.
\end{abstract}

\section{Introduction}
\label{sec:intro}
System Identification (SysId) is a foundational pillar of control theory, offering a data-based mathematical representation of an underlying dynamical process and thereby facilitating the analysis, design, and implementation of a wide range of control strategies. While linear SysId techniques have advanced significantly in recent decades due to the availability of well-developed tools, little has been done for nonlinear SysId. The application of linear SysId on nonlinear systems can lead to poor control performance or even instability in practical applications. Consequently, nonlinear SysId has emerged as a pivotal tool in modern control theory.

Nonlinear SysId remains an active field of research, with the choice of optimal nonlinear parametric models still an open question. Model structure often depends on the specific system under consideration, leading to tailored analyses and controller designs that are often limited to particular dynamical systems or applications. A recent survey \cite{schoukens_NonlinearSystemIdentification_2019} summarises various classical frameworks for nonlinear SysId, highlighting their strengths and limitations. Classical approaches include SysId Nonlinear Autoregressive eXogenous (NARX) models \cite{billings2013nonlinear} and kernel-based techniques like Reproducing Kernel Hilbert Spaces (RKHS) or Support Vector Machines (SVMs) \cite{pillonetto2014kernel, schon2011system}. These methods often suffer from challenges, including non-obvious kernel selection, lack of interpretability, scalability, and lack of controller design methods \cite{brunton2022data}. Here, interpretability refers to the ability to understand a model’s behaviour and quantifying properties such as stability, finite or incremental $\mathcal{L}_2$ gain, Lipschitz continuity, and dissipativity. Recently, Machine Learning (ML) models have gained popularity due to increased computational power \cite{chiuso2019system}, yet their intrinsic black-box nature hinders interpretability and controller synthesis. These challenges highlight the need to better integrate system identification with control design.

Recently, Structured State-space Models (SSMs), such as Mamba \cite{gu2024mamba}, have emerged as an alternative to Transformers for sequence modelling. Therefore, they are also natural candidates as surrogate models for nonlinear SysId. A typical SSM consists of a recurrent unit, such as a linear time-invariant dynamical system, surrounded by nonlinear neural-network \emph{scaffoldings} that map the signal into higher dimensions. SSMs have gained significant traction in both the machine learning and control communities \cite{alonso2025state, bonassi2024structured} due to their ability to capture long-term dependencies in time-series data while offering linear complexity with respect to sequence length, in contrast to the quadratic complexity of Transformer-based architectures. Dynamics and control system theory can be employed to analyse and interpret the properties of the recurrent unit, such as stability, to yield an interpretable nonlinear model suitable for controller synthesis. 

In this paper, SSMs are used as surrogate models for indirect controller design synthesis. Specifically, we adopt a variant of SSM, where the recurrent unit is a discrete-time linear time-invariant system called Linear Recurrent Unit (LRU) \cite{orvieto2023resurrecting}, and the scaffoldings are nonlinear NN maps. First, a sufficient structural condition for the controllability and observability of SSMs is derived, which is essential for output feedback controller design. Then, based on contraction theory \cite{lohmiller1998contraction,FB-CTDS} and discrete-time control contraction metrics \cite{manchester2017control}, convex sufficient conditions for the synthesis of a state-feedback controller and a Luenberger-like observer are provided. Interestingly, these conditions take the form of semidefinite programs that can be solved efficiently. The controller and observer pair guarantees global exponential closed-loop stability of the identified nonlinear model. Analogous to linear system theory, a separation principle for controller and observer design is provided, which is based on the auxiliary result on the input-to-state stability of the closed-loop system with a state-feedback controller. 

{\bf Related work: } 
Taking into account the expressivity of Neural Networks (NNs) for nonlinear SysId, a stream of works has considered indirect controller design via Recurrent NNs (RNNs) such as Gated Recurrent Units (GRUs) and Long Short-Term Memory (LSTMs). In these approaches, a surrogate model is identified, and then a controller is designed via Linear Matrix Inequalities (LMIs) \cite{la2024regional}, or the model is used for prediction in Model Predictive Control (MPC) \cite{ravasio2024lmi}. While this approach is promising, there are several caveats. For instance, one has to impose or promote desired properties such as incremental input-to-state ($\delta$-ISS) stability \cite{bonassi2021stability} to employ the model for controller design. Moreover, training LSTMs is more time-consuming compared to SSMs, as the simulation can be performed via the well-established parallel-scan method in the latter, whereas training LSTMs and RNNs requires expensive sequential roll-outs. 

Besides nonlinear RNNs, linear SysId via ML methods has also been explored. In particular, the authors of \cite{di2024simba,di2024stable} have demonstrated that backpropagation and auto differentiation can be leveraged to identify centralised and decentralised linear models that are guaranteed to be stable without compromising expressivity. While these models can be used for controller design, they cannot capture nonlinear intricacies. Another recent work \cite{forgione_DynoNetNeuralNetwork_2021} proposes a computationally friendly framework for nonlinear SysId based on Wiener-Hammerstein models, which uses linear dynamical operators as elementary building blocks in the NN architecture. While these models are highly expressive, easy to train, and exhibit state-of-the-art performance on nonlinear SysId benchmarks, the controller design procedure is not straightforward, therefore limiting their practical usage.

A parallel stream of works focuses on designing or training controllers that guarantee the stability of the closed-loop system \emph{by design}. In this framework, the controller is typically parametrised by an NN such that it ensures closed-loop stability both during and after training. In most cases, stability is guaranteed by compositional properties of dissipative systems. For instance, compositional properties of port-Hamiltonian systems have been considered in \cite{furieri2022distributed}, and dissipative properties such as $\mathcal{L}_2$ gain have been explored in \cite{zakwan2024neural,zakwan2024neural2}. While these methods ensure stability regardless of the choice of parameters, they are heavily dependent on neural Ordinary Differential Equations (ODE) \cite{chen2018neural}, which lose key desirable properties upon discretisation, posing a challenge for real-world implementations. Moreover, training these models along with integrating the ODE takes more time compared to the fast inference of SSMs. Similarly, several results leverage system-level synthesis \cite{furieri2022neural} and internal model control \cite{furieri2024learning} for designing nonlinear controllers for nonlinear systems. Moreover, in \cite{zakwan2024neuralcontrol}, contraction theory \cite{lohmiller1998contraction, FB-CTDS} has been employed for the set-point stabilisation of control-affine systems.  
However, these approaches are model-dependent and computationally expensive to train compared to SSMs. 

The main contributions of the paper are as follows:
\begin{itemize}
    \item Sufficient conditions for the controllability and observability of SSMs with an LRU are established.
    \item LMIs for the synthesis of state feedback and state observer are derived, ensuring the input-to-state stability of the closed-loop system.
    \item Analogous to linear system theory, a separation principle for the state-feedback controller and the observer is presented for the class of SSMs considered.
\end{itemize}
We demonstrate the proposed data-driven output feedback control strategy on a nonlinear DC motor subject to loads, Coulomb friction, and dead zone.

The paper is organised as follows: \Cref{sec:prelims} introduces basic notations, reviews SSMs, and contraction theory. \Cref{ssec:ctrb_obsrb_SSMs} establishes the controllability and observability conditions for SSMs. \Cref{ssec:state_fbk_controller,ssc:observer_design} derive the state feedback controller and state observer, respectively. \Cref{ssec:separation_principle} validates the separation principle for the proposed framework. In \Cref{sec:example}, the framework is applied to a nonlinear DC motor for SysId using SSMs and stabilising controller design. Finally, \Cref{sec:conclusion} concludes the paper and discusses future research directions.

\section{Preliminaries}
\label{sec:prelims}

\paragraph*{Notations}
$\mathbb{R}$ and $\mathbb{C}$ denote the real and complex numbers, respectively. ${M \succ (\succeq) \, N}$ indicates that ${M-N}$ is positive (semi-) definite, $I$ is the identity matrix, and $M^\T$ is the transpose of matrix $M$. The standard Euclidean norm is denoted as  $\norm{\cdot}$. A function ${f: \mathbb{R}^n \mapsto \mathbb{R}^m}$ is ($\mu$, $\nu$)-bi-Lipschitz if
\begin{equation*}
    \mu \norm{x - y} \leq \norm{f(x) - f(y)} \leq \nu \norm{x - y},
    \quad \forall x, y \in \mathbb{R}^n
\end{equation*}
for some  ${0 < \mu \leq \nu}$, and its Jacobian $\mathcal{J}(x)$ satisfies,
\begin{equation*}
     \mu  \leq \underline{\sigma}(\mathcal{J}(x)) \leq \bar{\sigma}(\mathcal{J}(x)) \leq \nu, \quad \forall x \in \mathbb{R}^n
\end{equation*}
where $\underline{\sigma}(\cdot)$ and $\bar{\sigma}(\cdot)$ denote the minimum and maximum singular values, respectively.


\subsection{Structured State-space Models (SSMs)}
SSMs are closely related to RNNs and classical state-space models. However, unlike RNNs, which process sequences iteratively, SSMs utilise global convolution \cite{gu_MambaLinearTimeSequence_2024} or Parallel scan \cite{blelloch_PrefixSumsTheir_1990}, leading to more efficient training and inference.

While several variants of SSMs exist in the literature, in this paper, the primary focus will be on the architecture illustrated in \cref{fig:SSM_architecture}. The fundamental components of an SSM are the `Recurrent Unit' (RU), the nonlinear input lifting ($\mathcal{S}_u$), and the nonlinear output projection ($\mathcal{S}_y$). The input liftings and output projections are commonly referred to as \emph{scaffoldings}. 
\footnote{In general, SSMs refer to deep models composed of multiple layers of the architecture depicted in \cref{fig:SSM_architecture}. However, in this paper, the term `SSM' denotes a single layer rather than a deep model.}
In this paper, it is assumed that $\mathcal{S}_u$ and $\mathcal{S}_y$ are \emph{bi-Lipschitz}. The bi-Lipschitzness property can be verified and computed a posteriori \cite{fazlyab_EfficientAccurateEstimation_2019}, or enforced a priori through structure \cite{araujo_UnifiedAlgebraicPerspective_2022, wang_MonotoneBiLipschitzPolyakLojasiewicz_2024}. This assumption guarantees the controllability and observability of the SSM, as shown in \cref{ssec:ctrb_obsrb_SSMs}.

\begin{figure}[tb]
    \centering
    \begin{tikzpicture}[
        auto, 
        node distance=5.5em,
        >=latex,
        thick,
    ]
    \tikzset{
        NN/.style={
            trapezium, 
            fill=#1!20, draw=#1!33!black, text=black, 
            minimum height=1.5em, minimum width=1.5em, 
            shape border rotate=90,
            },
        block/.style={
            rectangle, 
            fill=#1!20, draw=#1!33!black, text=black, 
            minimum height=2em, minimum width=2em, 
            },
        legend/.style={
            rectangle, 
            fill=#1!20, draw=#1!33!black, text=black, 
            minimum height=1em, minimum width=2em,
            },
        NL/.style={
            circle,
            fill=#1!20, draw=#1!33!black, text=black, 
            minimum height=1.5em,
            inner sep=0pt,
            },
        sum/.style={
            circle,
            fill=#1!20, draw=black, text=black, 
            minimum height=1em,
            inner sep=0pt,
            path picture={
                \draw
                    (path picture bounding box.north west) -- (path picture bounding box.south east)
                    (path picture bounding box.south west) -- (path picture bounding box.north east);
                },
            },
    }

    \node[NN=yellow] (MLP_u) {$\mathcal{S}_u$};
    
    \node[block=cyan, right of=MLP_u] (RU) {\small \makecell{Recurrent\\Unit (RU)}};
    
    \node[NN=yellow, right of=RU, shape border rotate=270] (MLP_y) {$\mathcal{S}_y$};


    \draw[<-] (MLP_u.west) -- +(-2em, 0)
        node[at end, left]{$u$};
    \draw[->] (MLP_u.east) -- (RU.west)
        node[midway, above]{$\bar{u}$};
    \draw[->] (RU.east) -- (MLP_y.west)
        node[midway, above]{$\bar{y}$};    
    \draw[->] (MLP_y.east) -- +(2em, 0)
        node[at end, right]{$y$};

        
        
    
\end{tikzpicture}
    \caption{Typical SSM layer}
    \label{fig:SSM_architecture}
\end{figure}

The recurrent unit is typically a discrete-time state-space model. In this paper, we focus on an LRU  \cite{orvieto2023resurrecting}, defined using the following discrete-time state-space equations: 
\begin{subequations}
    \label{eq:S4_RU}
    \begin{align}
        x_{k+1}      &= A x_k + B \bar{u}_k \\
        \bar{y}_k  &= C x_k + D \bar{u}_k 
    \end{align}
\end{subequations}
where
    \({ x \in \mathbb{R}^{n_x} }\), 
    \({ \bar{u} \in \mathbb{R}^{n_{\bar{u}}} }\), and 
    \({ \bar{y} \in \mathbb{R}^{n_{\bar{y}}} }\) 
denote the state, input, and output, respectively.\footnote{$(\bar{u}_k, \bar{y}_k$) denote the LRU input-output pair, while $(u_k,y_k)$ correspond to the SSM \eqref{eq:SSM_nonlinear_model}.} 
The matrices 
    \({ A \in \mathbb{R}^{n_x \times n_x} }\), 
    \({ B \in \mathbb{R}^{n_x \times n_{\bar{u}}} }\),
    \({ C \in \mathbb{R}^{n_{\bar{y}} \times n_x} }\), and
    \({ D \in \mathbb{R}^{n_{\bar{y}} \times n_{\bar{u}}} }\) are trainable parameters. We assume that $(A, B)$ is controllable and $(A, C)$ is observable.
Thus, the nonlinear model for the SSM can be written as:
\begin{subequations}
\label{eq:SSM_nonlinear_model}  
\begin{align}
   x_{k+1} &= A x_k + B \mathcal{S}_u(u_k)  \\
   y_k &= \mathcal{S}_y(Cx_k + D \mathcal{S}_u(u_k) )
\end{align}
\end{subequations}

\subsection{Contraction Analysis}
Contraction theory \cite{lohmiller1998contraction,tsukamoto2021contraction} provides a systematic framework to analyse and ensure stability of discrete-time nonlinear systems along arbitrary, time-varying (feasible) reference trajectories by examining the associated displacement or differential dynamics. Stability analysis and controller synthesis can be jointly addressed through Discrete-time Control Contraction Metrics (DCCMs) \cite{tsukamoto2021contraction}, which guarantee the system’s contraction properties.

To introduce the contraction-based approaches, consider a discrete-time nonlinear control-affine system as follows
\begin{equation}
    \label{eq:contraction:system}
    x_{k+1} = f(x_k) + g(x_k) u_k
\end{equation}
where 
    ${ x_k \in \mathcal{X} \subseteq \mathbb{R}^{n_x} }$ and
    ${ u_k \in \mathcal{U} \subseteq \mathbb{R}^{n_u} }$ denote the system states and the control inputs, respectively.
The corresponding differential dynamics can be given by
\begin{equation}
    \label{eq:contraction:differential_system}
    \delta x_{k+1} = A_k \delta {x_k} + B_k \delta {u_k},
\end{equation}
where 
    ${ A_k = A(x_k) \coloneqq \frac{\partial x_{k+1}}{\partial x_k} }$ and
    ${ B_k = B(x_k) \coloneqq \frac{\partial x_{k+1}}{\partial u_k} }$. 

Consider a state-feedback control law for the differential dynamics \eqref{eq:contraction:differential_system} defined as
\begin{equation}
    \label{equ:contraction:state_fbk}
    \delta {u_k} = K(x_k) \delta {x_k}
\end{equation}
where $K$ is a state-dependent function. 


\begin{defn}
    \label{def:contraction_condition}
    The discrete-time nonlinear system \eqref{eq:contraction:system}, with the associated differential dynamics \eqref{eq:contraction:differential_system} and differential state-feedback controller \eqref{equ:contraction:state_fbk}, is said to be contracting with respect to a uniformly bounded, symmetric, and positive definite metric $M_k = M(x_k) \in \mathbb{R}^{n_x \times n_x}$,
    if for all ${x \in \mathcal{X}}$ and all $\delta x$ in tangent space of $\mathcal{X}$, the following condition holds for some constant contraction rate ${0 < \rho < 1}$:
    \begin{equation}
        \label{eq:contraction_condition}
        (A_k + B_k K_k)^\T M_{k+1} (A_k + B_k K_k) - (1 - \rho) M_k \prec 0
    \end{equation}
    Furthermore, a subset of the state space $\mathcal{X}$ is defined as a `contraction region' if condition \eqref{eq:contraction_condition} holds for every point within that subset.
\end{defn}

\section{Main results}
This section establishes sufficient structural conditions to guarantee the controllability and observability of SSMs.

\subsection{Controllability \& Observability of SSMs}
\label{ssec:ctrb_obsrb_SSMs}
The analysis of controllability and observability of a class of nonlinear systems can be conducted only considering the local controllability and local observability at almost all points, respectively. Readers are referred to \cite{boscain_LocalControllabilityDoes_2023} for more details.

\begin{defn}
A system is locally controllable (or observable) in the neighbourhood of $(x_k, u_k)$ if its differential dynamics around $(x_k, u_k)$ is controllable (or observable). 
\end{defn}

The local controllability of a nonlinear system can be analysed using the controllability of its differential form along the solutions of the system. The differential dynamics for the LRU \eqref{eq:S4_RU} is given by,
\begin{subequations}
    \label{eq:differential_S4_RU}
    \begin{align}
        \delta x_{k+1}     &= A\,\delta x_k + B\,\delta {\bar{u}}_k \\
        \delta {\bar{y}}_k &= C\,\delta x_k + D\,\delta {\bar{u}}_k .
    \end{align}
\end{subequations}

Defining the Jacobian of the scaffolding $\mathcal{S}_u$ and $\mathcal{S}_y$ as $\mathcal{J}^u_k$ and $\mathcal{J}^y_k$, for each $k$ respectively, the differential form of the scaffolding can be written as:
${ \delta {\bar{u}}_k = \mathcal{J}^u_k \,\delta {u}_k }$, 
${ \delta {y}_k = \mathcal{J}^y_k \,\delta {\bar{y}}_k }$.
Hence, the differential form of the SSM \eqref{eq:SSM_nonlinear_model} is,
\begin{subequations}
    \label{eq:differential_S4}
    \begin{align}
        \delta x_{k+1} &= 
            \phantom{\mathcal{J}^y_k} A \,\delta x_k + 
            \phantom{\mathcal{J}^y_k} B \mathcal{J}^u_k \,\delta u_k \\
        \delta y_k     &= 
            \mathcal{J}^y_k C  \,\delta x_k + 
            \mathcal{J}^y_k D \mathcal{J}^u_k \,\delta u_k
    \end{align}
\end{subequations}

\begin{prop}[Local Controllability]
    The SSM model \eqref{eq:SSM_nonlinear_model} with the differential form \eqref{eq:differential_S4} is locally controllable if the recurrent unit is controllable and the input nonlinearity ($\mathcal{S}_u$) is a bi-Lipschitz function.
\end{prop}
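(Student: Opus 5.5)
By the definition of local controllability, it suffices to show that the differential dynamics \eqref{eq:differential_S4} is controllable along any solution. This system is linear time-varying in $(\delta x,\delta u)$, with constant state matrix $A$ and time-varying input matrix $B\mathcal{J}^u_k$, where $\mathcal{J}^u_k=\mathcal{J}^u(u_k)$ is evaluated along the input sequence. The plan is to reduce its controllability to that of the pair $(A,B)$ of the recurrent unit \eqref{eq:S4_RU}, using that $\mathcal{J}^u_k$ is surjective.

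\textbf{Step 1 (surjectivity of the Jacobian).} By the bi-Lipschitz assumption on $\mathcal{S}_u$, we have $\underline{\sigma}(\mathcal{J}^u_k)\ge\mu>0$ for all $k$. In the natural setting $n_{\bar u}=n_u$, this makes $\mathcal{J}^u_k$ square and invertible, with $\|(\mathcal{J}^u_k)^{-1}\|\le 1/\mu$. If the dimensions differ, one must read $\underline{\sigma}$ as the smallest of the $\min(n_u,n_{\bar u})$ singular values. I would then state the standing requirement $n_u\ge n_{\bar u}$, so that full row rank, i.e.\ $\operatorname{range}\mathcal{J}^u_k=\mathbb{R}^{n_{\bar u}}$, is what is actually used. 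I expect this to be the main subtlety: making precise that bi-Lipschitzness yields surjectivity of the Jacobian onto the LRU input space. The remaining steps are routine.

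\textbf{Step 2 (reachability in $n_x$ steps).} Fix an initial time $k_0$ and a target $\delta x_f$. Since $(A,B)$ is controllable, there exist $\delta\bar u_{k_0},\dots,\delta\bar u_{k_0+n_x-1}$ steering $\delta x$ from $\delta x_{k_0}$ to $\delta x_f$ in \eqref{eq:differential_S4_RU}. By Step~1, for each $k$ we can choose $\delta u_k$ with $\mathcal{J}^u_k\delta u_k=\delta\bar u_k$, for example $\delta u_k=(\mathcal{J}^u_k)^{\dagger}\delta\bar u_k$, or the inverse in the square case. The resulting state trajectory of \eqref{eq:differential_S4} coincides with that of \eqref{eq:differential_S4_RU}, so it reaches $\delta x_f$.

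\textbf{Step 3 (Gramian form).} Equivalently, the reachability matrix of the LTV system over $[k_0,k_0+n_x)$ is
\[
\mathcal{R}_{k_0}=\bmat{A^{n_x-1}B\mathcal{J}^u_{k_0} & \cdots & AB\mathcal{J}^u_{k_0+n_x-2} & B\mathcal{J}^u_{k_0+n_x-1}}.
\]
This equals $\bmat{A^{n_x-1}B & \cdots & B}\,\mathrm{blkdiag}(\mathcal{J}^u_{k_0},\dots,\mathcal{J}^u_{k_0+n_x-1})$. The block-diagonal factor has full row rank by Step~1, so $\operatorname{rank}\mathcal{R}_{k_0}=n_x$ for every $k_0$ and every trajectory. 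This gives (uniform) local controllability of \eqref{eq:SSM_nonlinear_model} in the neighbourhood of every $(x_k,u_k)$.
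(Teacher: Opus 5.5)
Your proof is correct, and it takes a rank/reachability route instead of the paper's Gramian sandwich. The paper bounds the Gramian of \eqref{eq:differential_S4} term by term using $\mu_u^2 I \preceq \mathcal{J}^u_k(\mathcal{J}^u_k)^\T \preceq \nu_u^2 I$. This gives $\mu_u^2 W_d' \preceq W_d \preceq \nu_u^2 W_d'$, where $W_d'$ is the Gramian of $(A,B)$. Your factorisation of $\mathcal{R}_{k_0}$ is the square root of that computation ($W = \mathcal{R}\mathcal{R}^\T$). So the core idea is shared: factor out the nondegenerate Jacobian and reduce everything to $(A,B)$.

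Your version gains economy and precision. You need only pointwise surjectivity of $\mathcal{J}^u_k$, and you exhibit an explicit steering input. Your pairing of $A^{n_x-1}$ with $\mathcal{J}^u_{k_0}$ is also the correct one for the time-varying system. The paper pairs $A^k$ with $\mathcal{J}^u_k$, which is harmless only because its bound is uniform in $k$. The paper's version gains a quantitative, trajectory-independent bound, i.e.\ uniform complete controllability with explicit constants. You can recover this by using $\|(\mathcal{J}^u_k)^{-1}\| \le 1/\mu$ in Step 2 to bound the steering input.

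Your dimension caveat is well placed, and sharper than the paper's treatment. The lower Lipschitz bound already gives $\|\mathcal{J}^u_k v\| \ge \mu\|v\|$, so $\mathcal{J}^u_k$ has full column rank and $n_u \le n_{\bar u}$ automatically. Your requirement $n_u \ge n_{\bar u}$ therefore amounts to $n_u = n_{\bar u}$. The paper's inequality $W_d \succeq \mu_u^2 W_d'$ tacitly needs the same thing, and the condition cannot be dropped. Take $A = 0$, $B = I_2$ and $\mathcal{S}_u(u) = (u,0)^\T$. Then $(A,B)$ is controllable and $\mathcal{S}_u$ is $(1,1)$-bi-Lipschitz, yet $(A, B\mathcal{J}^u_k)$ is not controllable.
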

\begin{proof}
    Discrete-time controllability Gramian for \eqref{eq:differential_S4} is
    \begin{equation}
        W_d 
            = \sum_{k=0}^{k_1} A^k B
            \mathcal{J}^u_k (\mathcal{J}^u_k)^\T
            B^\T \ab(A^\T)^k
            \label{eq:ctrb_Gramian_S4}
    \end{equation}
    For the SSM model to be locally controllable, the Gramian $W_d$ must be non-singular for some finite $k_1$. Since $(A, B)$ is controllable, $\exists\,\tilde{k}_1$ such that
    \begin{equation*}
        W_d' = \sum_{k=0}^{\tilde{k}_1} A^k B B^\T \ab(A^\T)^k \succ 0 
    \end{equation*}
    Moreover, since the input nonlinearity $\mathcal{S}_u$ is a ($\mu_u$, $\nu_u$)-bi-Lipschitz function, its Jacobian $\mathcal{J}^u$ satisfies
    $
       0 <  \mu_u \leq \underline{\sigma}(\mathcal{J}^u_k) \leq \bar{\sigma}(\mathcal{J}^u_k) \leq \nu_u.
    $
    Utilising these properties along with \eqref{eq:ctrb_Gramian_S4}, it can be observed that for $k_1 = \tilde{k}_1$, one has
    \[
        \nu_u^2 W_d' \succeq W_d \succeq \mu_u^2 W_d' \succ 0.
    \]
    This implies that the Gramian $W_d$ is non-singular.
\end{proof}

\begin{prop}[Local Observability]
    The SSM model \eqref{eq:SSM_nonlinear_model} with the differential form \eqref{eq:differential_S4} is locally observable if the recurrent unit is observable and the output nonlinearity ($\mathcal{S}_y$) is a bi-Lipschitz function.
\end{prop}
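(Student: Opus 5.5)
The argument is the dual of the one for local controllability: I would work with the discrete-time observability Gramian of the differential form \eqref{eq:differential_S4}. The direct feedthrough term $\mathcal{J}^y_k D \mathcal{J}^u_k \,\delta u_k$ is a known input contribution and does not affect observability. It therefore suffices to study the pair $(A, \mathcal{J}^y_k C)$, whose output map varies with $k$. The associated Gramian over a horizon $k_1$ is
\begin{equation*}
    W_o = \sum_{k=0}^{k_1} \ab(A^\T)^k C^\T (\mathcal{J}^y_k)^\T \mathcal{J}^y_k C A^k ,
\end{equation*}
and local observability amounts to showing that $W_o \succ 0$ for some finite $k_1$.

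Since $(A, C)$ is observable, there exists $\tilde{k}_1$ such that
\begin{equation*}
    W_o' = \sum_{k=0}^{\tilde{k}_1} \ab(A^\T)^k C^\T C A^k \succ 0 .
\end{equation*}
Next I would use the assumption that $\mathcal{S}_y$ is $(\mu_y,\nu_y)$-bi-Lipschitz. This gives
\begin{equation*}
    0 < \mu_y \leq \underline{\sigma}(\mathcal{J}^y_k) \leq \bar{\sigma}(\mathcal{J}^y_k) \leq \nu_y ,
\end{equation*}
and hence $\mu_y^2 I \preceq (\mathcal{J}^y_k)^\T \mathcal{J}^y_k \preceq \nu_y^2 I$ for every $k$.

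The only point needing care is that the lower bound $(\mathcal{J}^y_k)^\T \mathcal{J}^y_k \succeq \mu_y^2 I$ involves the Gram matrix $(\mathcal{J}^y_k)^\T\mathcal{J}^y_k$ of size $n_{\bar y}\times n_{\bar y}$. The inequality $\underline{\sigma}(\mathcal{J}^y_k) \geq \mu_y$ (equivalently, $\|\mathcal{J}^y_k v\| \geq \mu_y \|v\|$ for all $v$, which follows from bi-Lipschitzness) yields exactly this bound. Note that this is the orientation that matters here, in contrast to $\mathcal{J}^u_k(\mathcal{J}^u_k)^\T$ in the controllability proof.

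I would then apply the congruence $X \mapsto (CA^k)^\T X (CA^k)$ to each term, which preserves Loewner ordering, and sum over $k$ with $k_1 = \tilde{k}_1$. This gives
\begin{equation*}
    \nu_y^2 W_o' \succeq W_o \succeq \mu_y^2 W_o' \succ 0 ,
\end{equation*}
so $W_o$ is non-singular. The differential dynamics are therefore observable along any trajectory, which is local observability in the sense of the definition above.
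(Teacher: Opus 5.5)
Your proof is correct and follows the paper's approach: the paper says only that the argument parallels the controllability proof via the observability Gramian, and you carry this out explicitly to get $\mu_y^2 W_o' \preceq W_o \preceq \nu_y^2 W_o'$. Your remark on orientation is also right: $(\mathcal{J}^y_k)^\T\mathcal{J}^y_k \succeq \mu_y^2 I$ does follow from bi-Lipschitzness, whereas the analogous bound $\mathcal{J}^u_k(\mathcal{J}^u_k)^\T \succeq \mu_u^2 I$ used in the controllability proof fails whenever $\mathcal{S}_u$ genuinely lifts to a higher dimension, so the dual argument is actually cleaner here.
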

\begin{proof}
    The proof parallels that of controllability, employing the discrete-time observability Gramian.
\end{proof}

\subsection{State feedback controller}
\label{ssec:state_fbk_controller}


Consider the discrete-time nonlinear system
\begin{equation}
    \label{eq:state_fbk_plant}
    x_{k+1} = A x_k + B \mathcal{S}_u(u_k),
\end{equation}
where 
    ${ x_k \in \mathbb{R}^{n_x} }$ is the state, 
    ${ A \in \mathbb{R}^{n_x \times n_x} }$ and 
    ${ B \in \mathbb{R}^{n_x \times n_u} }$ are known constant matrices, and 
    ${ \mathcal{S}_u(\cdot): \mathbb{R}^{n_u} \mapsto \mathbb{R}^{n_u} }$ 
is a nonlinear mapping which is ($\mu_u$, $\nu_u$)-bi-Lipschitz, and satisfies $\mathcal{S}_u(0) = 0$. The goal of this section is to design a static state-feedback gain $K$, such that for
${ u_k = K x_k }$
the closed-loop system is exponentially stable for all admissible ($\mu_u$, $\nu_u$)-bi-Lipschitz nonlinearities.

\begin{thm}[State-feedback Controller]
    \label{thm:state_fbk}
    Suppose there exist matrices ${P = P^\T \succ 0}$, $Y \in \mathbb{R}^{n_x \times n_x}$ and $X \in \mathbb{R}^{n_u \times n_x}$ satisfying the following LMI for some ${\sigma > 0}$ and ${\rho_c \in (0,1)}$:
    \begin{equation}
        \label{eq:state_fbk_LMI}
        \begin{bmatrix}
            (1 -\rho_c) P - \sigma B B^\T & A Y + \alpha_u BX & 0           \\
            \star                      & Y^\T + Y - P        & \beta_u X^\T \\
            \star                      & \star              & \sigma I    \\
        \end{bmatrix} \succ 0
    \end{equation}
    where ${ \alpha_u = \frac{\nu_u + \mu_u}{2} }$ and ${ \beta_u = \frac{\nu_u - \mu_u}{2} }$. Then, the nonlinear closed-loop system \eqref{eq:state_fbk_plant} is exponentially stable for bi-Lipschitz functions $\mathcal{S}_u(\cdot)$, and the stabilising controller gain is given by ${ K = X Y^{-1}}$.
\end{thm}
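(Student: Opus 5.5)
Our plan is to apply the contraction condition of \Cref{def:contraction_condition} to the closed loop $x_{k+1} = A x_k + B\mathcal{S}_u(Kx_k)$ with the constant metric $M = P^{-1}$, and then recover the LMI \eqref{eq:state_fbk_LMI} through a sector-type (loop-transformation) bound plus a standard change of variables.

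\emph{Differential dynamics and loop transformation.} The closed-loop differential dynamics are $\delta x_{k+1} = (A + B\mathcal{J}^u_k K)\,\delta x_k$. The bi-Lipschitz property gives singular values of $\mathcal{J}^u_k$ in $[\mu_u,\nu_u]$. We then write $\mathcal{J}^u_k = \alpha_u I + \beta_u \Delta_k$ with $\norm{\Delta_k}\le 1$. This decomposition is exact when $\mathcal{J}^u_k$ is symmetric, for instance for the diagonal or slope-restricted scaffoldings typically used. In general it is the step to justify carefully: we would either restrict to symmetric Jacobians or interpret the bi-Lipschitz bounds as a sector condition on $\mathcal{J}^u_k$. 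Granting it, we get $A_{cl,k} = A + \alpha_u BK + \beta_u B\Delta_k K$, which is a norm-bounded uncertain linear system.

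\emph{Contraction condition.} With $M = P^{-1}$, the goal is
\[
A_{cl,k}^\T P^{-1} A_{cl,k} - (1-\rho_c)P^{-1} \prec 0
\]
uniformly in $\Delta_k$. A Schur complement makes this equivalent to
\[
\bmat{(1-\rho_c)P & A_{cl,k}\,\cdot \\ \star & \cdot}\succ 0.
\]
To linearise in $K$, we use the slack variable $Y$. Since $(Y-P)^\T P^{-1}(Y-P)\succeq 0$, we have $Y^\T P^{-1}Y \succeq Y^\T + Y - P$. So it suffices, after congruence with $\mathrm{diag}(I, Y)$, to have
\[
\bmat{(1-\rho_c)P & (A + \alpha_u BK + \beta_u B\Delta_k K)Y \\ \star & Y^\T + Y - P}\succ 0 .
\]
Setting $X = KY$ turns the nominal part into $AY + \alpha_u BX$. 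Feasibility of the LMI forces $Y^\T + Y \succ P \succ 0$, so $Y$ is invertible and $K = XY^{-1}$ is well defined.

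\emph{Removing the uncertainty.} The remaining term is $\bmat{B\\0}\Delta_k\bmat{0 & \beta_u X} + (\star)^\T$. By the standard Petersen/Young lemma, together with $\Delta_k^\T\Delta_k\preceq I$, positivity for all such $\Delta_k$ is implied by
\[
\bmat{(1-\rho_c)P & AY+\alpha_u BX\\ \star & Y^\T+Y-P} - \sigma\bmat{BB^\T & 0\\0&0} - \sigma^{-1}\bmat{0\\ \beta_u X^\T}\bmat{0 & \beta_u X}\succ 0 .
\]
A Schur complement on the last term yields exactly \eqref{eq:state_fbk_LMI}.

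\emph{Conclusion.} Uniform contraction with a constant metric gives $\norm{\delta x_k}_{P^{-1}}\le (1-\rho_c)^{k/2}\norm{\delta x_0}_{P^{-1}}$. Integrating along the straight-line path between two trajectories, and noting that $x\equiv 0$ is a solution because $\mathcal{S}_u(0)=0$, we obtain global exponential stability of the origin.

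The main obstacle is the loop-transformation step. Bi-Lipschitz singular-value bounds do not, by themselves, give $\norm{\mathcal{J}^u_k-\alpha_u I}\le\beta_u$ for non-symmetric Jacobians. We would handle this by stating the symmetric or sector interpretation explicitly, or by working directly with the incremental sector inequality $(\mathcal{S}_u(a)-\mathcal{S}_u(b)-\mu_u(a-b))^\T(\nu_u(a-b)-\mathcal{S}_u(a)+\mathcal{S}_u(b))\ge 0$ via an S-procedure, with $\sigma$ as the multiplier.
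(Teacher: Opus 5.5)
This is essentially the paper's proof. Both use the same closed-loop variational dynamics, the same slack-variable extended LMI with $X = KY$, the same loop transformation $\mathcal{J}^u_k = \alpha_u I + \beta_u \Delta$, and the same elimination of the norm-bounded uncertainty: the paper invokes \cite[Lemma 2]{hindi_ComputingOptimalUncertainty_2002}, which is your Petersen/Young step followed by a Schur complement in $\sigma I$. Your metric $M = P^{-1}$ is in fact the consistent choice. The paper's congruence with $T = \bmat{I & -A_{\text{cl},k}}$ actually produces $A_{\text{cl},k} P A_{\text{cl},k}^\T \prec (1-\rho_c)P$, despite its stated $V(z) = z^\T P z$. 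Your bound $Y^\T P^{-1} Y \succeq Y^\T + Y - P$ is just the other standard route to the same extended LMI. Your last step, from differential contraction to exponential stability of the origin via $\mathcal{S}_u(0)=0$, supplies something the paper leaves implicit.

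The obstacle you flag is genuine, and the paper's proof simply asserts the decomposition. Your first proposed fix is wrong, though: symmetry of $\mathcal{J}^u_k$ is not enough. You need its eigenvalues to lie in $[\mu_u,\nu_u]$, i.e.\ symmetric positive definite, or increasing in the scalar case. Take $\mathcal{S}_u(u) = -u$, which is $(1,1)$-bi-Lipschitz with symmetric Jacobian $-I$, yet $\norm{-I - \alpha_u I} = 2 > \beta_u = 0$. With scalar $A = 2$, $B = 1$, the choice $P = Y = 1$, $X = -2$, $\rho_c = 1/2$, $\sigma = 1/4$ satisfies \eqref{eq:state_fbk_LMI}, but the resulting gain $K=-2$ gives the closed loop $x_{k+1} = 4x_k$. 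So the theorem is false as stated, and your second fix is the right hypothesis rather than a convenience. Because $\alpha_u^2 - \beta_u^2 = \mu_u\nu_u$, your incremental sector inequality is exactly $\norm{\mathcal{S}_u(a) - \mathcal{S}_u(b) - \alpha_u(a-b)} \le \beta_u \norm{a-b}$, i.e.\ $\norm{\mathcal{J}^u_k - \alpha_u I} \le \beta_u$. That is precisely what the $\Delta$-decomposition requires, and under it your argument is complete.
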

    
\begin{proof}
    Define the variational closed-loop dynamics under the control policy $u_k = K x_k$ and denote the Jacobian of $\mathcal{S}_u(\cdot)$ with respect to input as $\mathcal{J}^u_k$
    \begin{equation}
    \label{eq:closed_loop_diff}
        \delta x_{k+1} = \ab(A + B \mathcal{J}^u_k K) \delta x_k
    \end{equation}
    Consider the candidate Lyapunov function ${V(z) = z^\T P z}$, with ${P = P^\T \succ 0}$. The forward difference satisfies
    \begin{equation*}
    V(\delta x_{k+1}) - V(\delta x_k)
        = (\delta x_k)^\T
            \ab( A_{\text{cl},k}^\T P A_{\text{cl},k} - P )
            (\delta x_k)
    \end{equation*}
    where ${A_{\text{cl},k} = A + B \mathcal{J}^u_k K}$. To ensure exponential decay, it suffices to require that
    \begin{equation}
    \label{eq:LMI_closed_loop}
        A_{\text{cl},k}^\T P A_{\text{cl},k} - (1 - \rho_c) P \prec 0,
    \end{equation}
    for all $\mathcal{J}^u_k$ such that $0 < \mu_u \leq \underline{\sigma}(\mathcal{J}^u_k) \leq \bar{\sigma}(\mathcal{J}^u_k) \leq \nu_u$, which would ensure the contraction rate of $\rho_c$ for the variational dynamics. 
    
    Define the congruence transformation ${ T = \bmat{ I & -A_{\text{cl},k} } }$ which is full row rank, and a free parameter ${ Y \in \mathbb{R}^{n_x \times n_x} }$. Then, \eqref{eq:LMI_closed_loop} can be rewritten as,
    \begin{align}
        T \bmat{
            (1 - \rho_c) P        & A_{\text{cl},k} Y \\
            \star    & Y^\T + Y - P
        } T^\T &\succ 0
        \\
        \iff \bmat{
            (1 - \rho_c) P       & A_{\text{cl},k} Y \\
            \star    & Y^\T + Y - P
        }
        &\succ 0.
        \label{eq:state_fbk_base_LMI}
    \end{align}
    By introducing a change of variables ${X = K Y}$, \eqref{eq:state_fbk_base_LMI} is equal to
    \begin{equation}
        \label{eq:state_fbk_general_LMI}
        \begin{bmatrix}
            (1 - \rho_c) P    & A Y + B \mathcal{J}^u_k X \\
            \star & Y^\T + Y - P
        \end{bmatrix} \succ 0
    \end{equation}
    For some $\Delta$ with $\bar{\sigma}(\Delta) \leq 1$, all $\mathcal{J}_k^u$ can be written as
    \begin{equation*}
        \mathcal{J}_k^u 
            = \underbrace{\ab(\frac{\nu_u + \mu_u}{2})}_{\alpha_u} I
            + \underbrace{\ab(\frac{\nu_u - \mu_u}{2})}_{\beta_u} \Delta
    \end{equation*}
    Then, the inequality \eqref{eq:state_fbk_general_LMI} can be rewritten as,
    \begin{equation}
        \begin{bmatrix}
            (1 - \rho_c) P & A Y + B (\alpha_u I + \beta_u \Delta) X \\
            \star         & Y^\T + Y - P
        \end{bmatrix} \succ 0 
    \end{equation}
    Then, using \cite[Lemma 2]{hindi_ComputingOptimalUncertainty_2002}, the equivalent LMI \eqref{eq:state_fbk_LMI} is obtained. If a feasible solution $(X, Y)$ exists, the stabilising feedback gain can be computed as ${K = X Y^{-1}}$.
\end{proof}

\Cref{thm:state_fbk} provides a convex condition ensuring robust stability of the bi-Lipschitz nonlinear system \eqref{eq:state_fbk_plant}. It captures the uncertainty in the slope of the nonlinearity and guarantees stability for all admissible bi-Lipschitz mappings $\mathcal{S}_u(\cdot)$.

\subsection{Observer design}
\label{ssc:observer_design}

It is well established that control and observer design problems for linear systems enjoy a fundamental and elegant \emph{duality} relation (see, e.g., \cite{hespanha2018linear}). In this section, the result in \cite{manchester2014control}, stating that DCCMs possess an analogous duality relationship to nonlinear observer designs formulated using Riemannian metrics, is leveraged. In particular, we provide a tractable LMI formulation, in contrast to \cite{manchester2014control}, which provides only infinite-dimensional conditions. Furthermore, a novel construction of a Luenberger-like observer for the SSM is presented.

Consider the following Luenberger-like observer:
\begin{subequations}
\label{eq:proposed_observer}
\begin{align}
    \hat{x}_{k+1} &= A\hat{x}_k + B \mathcal{S}_u(u_k) + L ( \hat{y}_k - y_k )
\\
    \hat{y}_k &= \mathcal{S}_y ( C \hat{x}_k + D \mathcal{S}_u(u_k) )
\end{align}
\end{subequations}
where $L \in \mathbb{R}^{n_x \times n_y}$ is the observer gain to be computed. 

\begin{thm}[State Observer]
    \label{thm:observer_design}
    Suppose there exist matrices ${Q = Q^\T \succ 0}$, ${U \in \mathbb{R}^{n_x \times n_x}}$ and ${V \in \mathbb{R}^{n_x \times n_y}}$ satisfying the following LMI for some $\eta > 0$ and $\rho_o \in (0,1)$: 
    \begin{equation} 
        \label{eq:state_obs_LMI}
        \begin{bmatrix}
            (1 -\rho_o) Q - \eta C^\T C & (UA + \alpha_y VC)^\T & 0         \\
            \star                       & U + U^\T - Q          & \beta_y V \\
            \star                       & \star                 & \eta I    \\
        \end{bmatrix} \succ 0 .
    \end{equation}
    where ${ \alpha_y = \frac{\nu_y + \mu_y}{2} }$ and ${ \beta_y = \frac{\nu_y - \mu_y}{2} }$. Then, the nonlinear observer \eqref{eq:proposed_observer} is exponentially stable for all bi-Lipschitz functions $\mathcal{S}_y(\cdot)$, and the observer gain is given by ${L = U^{-1} V}$.
\end{thm}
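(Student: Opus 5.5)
The plan is to work directly with the estimation error ${e_k = \hat{x}_k - x_k}$ and to mirror the proof of \Cref{thm:state_fbk}, with the roles of $(A,B,K)$ replaced by the transposed data $(A^\T, C^\T, L^\T)$ (duality). Subtracting \eqref{eq:SSM_nonlinear_model} from \eqref{eq:proposed_observer}, the input terms $B\mathcal{S}_u(u_k)$ and $D\mathcal{S}_u(u_k)$ cancel. This gives
\[
e_{k+1} = A e_k + L\big(\mathcal{S}_y(C\hat{x}_k + D\mathcal{S}_u(u_k)) - \mathcal{S}_y(Cx_k + D\mathcal{S}_u(u_k))\big).
\]
I will treat $\mathcal{S}_y$ as a square map, like $\mathcal{S}_u$ in \Cref{ssec:state_fbk_controller}. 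By the mean value theorem in integral form, the bracket equals $\bar{\mathcal{J}}^y_k C e_k$, where $\bar{\mathcal{J}}^y_k = \int_0^1 \mathcal{J}^y(s)\,\dd{s}$ is the Jacobian of $\mathcal{S}_y$ averaged along the segment joining the two arguments.

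As in \Cref{thm:state_fbk}, I will write $\mathcal{J}^y = \alpha_y I + \beta_y \Delta(s)$ with $\bar{\sigma}(\Delta(s)) \le 1$. Averaging gives $\bar{\mathcal{J}}^y_k = \alpha_y I + \beta_y \bar{\Delta}_k$, and $\bar{\sigma}(\bar{\Delta}_k) \le 1$ because the unit spectral-norm ball is convex. The error dynamics are therefore exactly linear parameter-varying,
\[
e_{k+1} = A_{o,k} e_k, \qquad A_{o,k} = A + L(\alpha_y I + \beta_y \bar{\Delta}_k)C,
\]
so no differential or contraction argument along trajectories is needed. Take $V(e) = e^\T Q e$. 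It then suffices to show $A_{o,k}^\T Q A_{o,k} - (1-\rho_o)Q \prec 0$ for every admissible $\bar{\Delta}_k$, since this yields $V(e_{k+1}) \le (1-\rho_o)V(e_k)$ and hence exponential convergence of $e_k$.

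To make this condition convex, I will use the slack-variable trick. Consider
\[
\bmat{(1-\rho_o)Q & (UA_{o,k})^\T \\ UA_{o,k} & U+U^\T-Q} \succ 0.
\]
Since $(U-Q)Q^{-1}(U-Q)^\T \succeq 0$, we have $U+U^\T-Q \preceq UQ^{-1}U^\T$; in particular $U$ is invertible. A Schur complement then gives $(1-\rho_o)Q - A_{o,k}^\T U^\T (UQ^{-1}U^\T)^{-1} U A_{o,k} = (1-\rho_o)Q - A_{o,k}^\T Q A_{o,k} \succ 0$, which is the desired decrease condition.

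With the change of variables $V = UL$, the off-diagonal block becomes $UA + \alpha_y VC + \beta_y V\bar{\Delta}_k C$. The matrix inequality thus takes the form $M + N\bar{\Delta}_k R + (N\bar{\Delta}_k R)^\T \succ 0$, where
\begin{itemize}
\item $M$ is the nominal matrix with off-diagonal block $(UA + \alpha_y VC)^\T$,
\item $N = \bmat{0 \\ \beta_y V}$ and $R = \bmat{C & 0}$.
\end{itemize}
By \cite[Lemma 2]{hindi_ComputingOptimalUncertainty_2002} (a Petersen-type lemma), this holds for all $\bar{\sigma}(\bar{\Delta}_k) \le 1$ if and only if there exists $\eta > 0$ with $M - \eta R^\T R - \eta^{-1} N N^\T \succ 0$. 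A Schur complement on the $\eta^{-1}NN^\T$ term turns this into the $3\times 3$ LMI \eqref{eq:state_obs_LMI}. The $-\eta C^\T C$ term appears in the $(1,1)$ block, and $\beta_y V$ together with $\eta I$ fill the third row and column. The gain is then recovered as $L = U^{-1}V$.

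I expect the main obstacle to be the first step: justifying that the exact nonlinear error dynamics, and not merely the differential dynamics, are governed by some matrix $\alpha_y I + \beta_y \bar{\Delta}_k$ in the uncertainty set. I will handle it with the integral mean-value argument and the convexity of the norm ball, as above. A secondary bookkeeping point is checking the placement of transposes in the Petersen lemma step. Because $\bar{\Delta}_k^\T$ lies in the same ball as $\bar{\Delta}_k$, the orientation in \eqref{eq:state_obs_LMI} is consistent.
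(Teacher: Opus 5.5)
Your route differs from the paper's. The paper argues by contraction. It first checks that the observer is ``correct'': initialised at $x_0$, it reproduces $x_k$, so the true state is itself an observer trajectory. It then points to the variational dynamics $\delta\hat{x}_{k+1} = (A + L\mathcal{J}^y_k C)\,\delta\hat{x}_k$ and the contracting-observer result it cites, and only asserts that the LMI follows ``similar to'' \Cref{thm:state_fbk}. You instead write the exact error dynamics as an LPV system $e_{k+1} = (A + L\bar{\mathcal{J}}^y_k C)e_k$ via the integral mean-value theorem. Correctness is then automatic, since $e=0$ is an equilibrium, and no Riemannian or geodesic argument is needed. With the constant metric $Q$ that the LMI provides, the two arguments are equivalent, and yours is more elementary and more complete. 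Your algebra checks out: the slack bound $U+U^\T-Q \preceq UQ^{-1}U^\T$, the substitution $V=UL$, Petersen's lemma with $N=\bmat{0 \\ \beta_y V}$ and $R=\bmat{C & 0}$, and the final Schur complement on $\eta^{-1}NN^\T$ reproduce \eqref{eq:state_obs_LMI} exactly. They certify $e^\T Q e$ as a common Lyapunov function for the true error, a derivation the paper never writes out. For leaky-ReLU-type scaffoldings the mean-value step should be phrased with Clarke Jacobians, but that is minor.

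The step that fails is the one you import from the proof of \Cref{thm:state_fbk}: writing every admissible Jacobian as $\mathcal{J}^y = \alpha_y I + \beta_y \Delta$ with $\bar{\sigma}(\Delta)\le 1$. That representation is equivalent to $\|\mathcal{J}^y - \alpha_y I\| \le \beta_y$, an incremental sector (slope-restriction) condition. Bi-Lipschitzness only bounds singular values, so it admits Jacobians such as $-I$ or rotations.

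Consequently the statement as written is false, and the paper's proof has the same gap. Take $n_x = n_y = 1$, $A = C = 1$, $\mu_y = 1/2$, $\nu_y = 3/2$, so $\alpha_y = 1$ and $\beta_y = 1/2$. The choice $Q=U=1$, $V=-1$, $\eta = 1/2$, $\rho_o = 2/5$ makes the LMI matrix block-diagonal with blocks $1/10$ and $\bmat{1 & -1/2 \\ -1/2 & 1/2} \succ 0$, giving $L=-1$. Yet for the $(1/2,3/2)$-bi-Lipschitz map $\mathcal{S}_y(z) = -z$ the error obeys $e_{k+1} = 2e_k$.

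What your argument actually proves is the theorem for scaffoldings satisfying $\|\mathcal{S}_y(a)-\mathcal{S}_y(b) - \alpha_y(a-b)\| \le \beta_y \|a-b\|$. In the scalar case these are increasing maps with slope in $[\mu_y,\nu_y]$. You should state that hypothesis explicitly instead of bi-Lipschitzness. Under it your averaging step is sound, because the set $\{\alpha_y I + \beta_y\Delta : \bar{\sigma}(\Delta)\le 1\}$ is convex.
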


\begin{proof}
    The proof is done in two steps. 
    First, it is shown that the proposed observer is considered `correct' according to the definition of correctness provided in \cite{manchester2018contracting}. Specifically, when the proposed observer is initialised with ${\hat{x}_0 = x_0}$, the observer matches the true system, i.e., ${\hat{x}_k = x_k}$ for all $k \geq 0$. If ${\hat{x}_0 = x_0}$, from \eqref{eq:proposed_observer},
    ${
        \hat{x}_1 
            = A\hat{x}_0 + B \mathcal{S}_u(u_0) + \cancelto{0}{L ( \hat{y}_0 - y_0 )}
            = x_1
    }$.
    Using induction, it can be easily proved that the proposed observer is `correct'.

    Second, if the LMI condition outlined in \eqref{eq:state_obs_LMI} is satisfied, then the proposed observer exhibits global exponential stability. This indicates that the error between $x_k$ and $\hat{x}_k$ converges to zero at an exponential rate, as stated in \cite{manchester2018contracting}. 
    Similar to the proof for the state-feedback controller, the variational dynamics of the nonlinear observer \eqref{eq:proposed_observer}, which describes the evolution of an infinitesimal displacement $\delta \hat{x}_k$ between two neighbouring observer trajectories under identical input, is considered with the change of variables ${V = U L}$. Furthermore, the observer gain can be recovered as ${L = U^{-1} V}$. 
    %
\end{proof}


\subsection{Separation principle for the SSMs}
\label{ssec:separation_principle}

If a linear system is both controllable (or stabilizable) and observable (or detectable), the controller and observer design can be done independently. This extremely useful property, known as the `separation principle', does not hold for general non-linear systems. In \cite{manchester2014output}, it is demonstrated that the separation principle holds for a continuous-time nonlinear system if it is universally stabilizable and detectable. 
Numerous studies have also addressed the separation principle for nonlinear systems, such as those in \cite{atassi2002separation, shiriaev2008separation, martinez2017separation}. However, these approaches often impose structural constraints on the nonlinear model or rely on high-gain observers.
The current section establishes the separation principle for the discrete-time SSM \eqref{eq:SSM_nonlinear_model}. We start by introducing a preliminary result. 

\begin{lem}[Discrete-time contraction with disturbance]
    \label{thm:dt_contraction_disturbance}
    Consider the discrete-time system
    \begin{equation}
        x_{k+1} = A x_k + B \mathcal{S}_u(u_k) + w_k
    \label{eq:dt_sys}
    \end{equation}
    where $w_k$ is a disturbance input. Let the state-feedback $u_k= K x_k$ be the state-feedback controller designed in \Cref{thm:state_fbk}. Denote the closed-loop map by ${\Phi(x) \coloneqq A x_k +  B \mathcal{S}_u(K x_k)}$ and assume there exists a smooth metric ${M = \Theta^\T \Theta \succ 0}$ and a constant ${\rho \in (0,1)}$ such that for all $x_k$
    \begin{equation} 
        \label{eq:dt_diff_contraction}
        \Theta \frac{\partial \Phi}{\partial x}(x_k) \Theta^{-1} = F(x_k)
    \end{equation}
    where ${\norm{F(x_k)} \leq \rho}$. Furthermore, assume that ${\norm{\Theta} \leq c_\Theta}$ for all $x_k$. Let ${d_k \coloneqq d(x_k, x_k^\star)}$ be the Riemannian distance from $x_k$ to $x_k^\star$ with respect to the metric $M$. Then,
    \begin{equation}
        \label{eq:dt_one_step}
        d_{k+1} \leq \rho d_k + c_\Theta \norm{w_k}, \quad \forall k
    \end{equation}
\end{lem}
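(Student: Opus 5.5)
Here $x_k^\star$ denotes the unperturbed closed-loop trajectory, $x_{k+1}^\star = \Phi(x_k^\star)$, and $x_k$ the perturbed one, $x_{k+1} = \Phi(x_k) + w_k$. The plan is the standard path-based contraction argument. Split the one-step update into a nominal part $\Phi(x_k)$ and the additive disturbance, then use the triangle inequality for the Riemannian distance:
\begin{equation*}
d(x_{k+1},x_{k+1}^\star) \leq d\big(\Phi(x_k),\Phi(x_k^\star)\big) + d\big(\Phi(x_k)+w_k,\Phi(x_k)\big).
\end{equation*}
The two terms are bounded separately.

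\textbf{Nominal term.} Let $\gamma:[0,1]\to\mathbb{R}^{n_x}$ be a minimizing (or $\epsilon$-minimizing) geodesic with $\gamma(0)=x_k^\star$ and $\gamma(1)=x_k$. Then
\begin{equation*}
d_k = \int_0^1 \norm{\Theta(\gamma(s))\,\gamma'(s)}\dd{s}.
\end{equation*}
The image curve $\Phi\circ\gamma$ joins $\Phi(x_k^\star)$ to $\Phi(x_k)$, with tangent $\frac{\partial\Phi}{\partial x}(\gamma(s))\,\gamma'(s)$. Its length in the metric, evaluated at the image point, is
\begin{equation*}
\int_0^1 \norm{\Theta\,\frac{\partial \Phi}{\partial x}(\gamma(s))\,\Theta^{-1}\,\Theta\,\gamma'(s)}\dd{s} = \int_0^1 \norm{F(\gamma(s))\,\Theta\,\gamma'(s)}\dd{s} \leq \rho\, d_k,
\end{equation*}
using \eqref{eq:dt_diff_contraction} and $\norm{F}\leq\rho$. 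Since the Riemannian distance is an infimum over curves, $d(\Phi(x_k),\Phi(x_k^\star))\leq \rho\, d_k$.

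\textbf{Disturbance term.} Take the straight segment $c(s)=\Phi(x_k)+s\,w_k$ for $s\in[0,1]$. Its metric length is
\begin{equation*}
\int_0^1 \norm{\Theta(c(s))\,w_k}\dd{s} \leq c_\Theta\,\norm{w_k},
\end{equation*}
using the uniform bound $\norm{\Theta}\leq c_\Theta$. Combining the two bounds gives \eqref{eq:dt_one_step}.

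\textbf{Main obstacle.} The delicate step is the nominal term, because it requires interpreting \eqref{eq:dt_diff_contraction} consistently with where $\Theta$ is evaluated. For a state-dependent metric, the left factor should be $\Theta(\Phi(x))$ and the right factor $\Theta(x)^{-1}$; this is what makes the pushed-forward length computed at the image point. I will state this interpretation explicitly. If $\Theta$ is constant, as with the quadratic metric $P$ from \Cref{thm:state_fbk}, where $P=\Theta^\T\Theta$, the issue disappears.

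A second technicality is that a minimizing geodesic may not exist. I will avoid this by working with $\epsilon$-optimal curves and letting $\epsilon\to 0$. This is legitimate because $M$ is uniformly bounded, so the distance is finite and is comparable to the Euclidean distance.
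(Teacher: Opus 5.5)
Your proof is correct, and it differs from the paper's in how the disturbance is absorbed.

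\textbf{The paper's route.} It uses a single curve. It transports the geodesic $\gamma_k$ through the closed-loop map and simply posits $\delta x_{k+1}=(A+B\mathcal{J}^u_kK)\,\delta x_k+w_k$. This implicitly means the next-step curve is $s\mapsto\Phi(\gamma_k(s))+s\,w_k$. It then bounds $\norm{\delta z_{k+1}}\leq\rho\norm{\delta z_k}+c_\Theta\norm{w_k}$ pointwise in $s$ and integrates, writing $d_{k+1}=\int_0^1\norm{\delta z_{k+1}(s)}\dd{s}$ where only ``$\leq$'' holds.

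\textbf{Your route.} You use the triangle inequality to separate a pushed-forward geodesic from a straight disturbance segment.

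\textbf{Comparison.} For a constant metric, such as $P=\Theta^\T\Theta$ from \Cref{thm:state_fbk}, the two computations coincide and give the same bound. For a genuinely state-dependent $\Theta$, your split is the more careful one. On the paper's single curve, the left factor $\Theta$ is evaluated at the perturbed point $\Phi(\gamma_k(s))+s\,w_k$. The hypothesis \eqref{eq:dt_diff_contraction} does not control that quantity when read, as you do, as a bound on $\Theta(\Phi(x))\frac{\partial\Phi}{\partial x}(x)\Theta(x)^{-1}$. Your nominal term evaluates the metric exactly where the hypothesis applies, and your disturbance segment needs only the uniform bound $c_\Theta$. The paper's version is more compact and mirrors the continuous-time differential argument.

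\textbf{A minor point.} Your $\epsilon$-optimal-curve device only needs $d_k<\infty$. This already follows from $\norm{\Theta}\leq c_\Theta$ via straight segments. Comparability with the Euclidean distance would additionally require a uniform lower bound on $M$, which the lemma does not assume and your argument does not use.
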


\begin{proof}
    Let ${\gamma_k: [0,1] \mapsto \mathbb{R}^{n_x}}$ be a unit-speed geodesic joining ${x_k^\star = \gamma_k(0)}$ to ${x_k = \gamma_k(1)}$ in the metric $M$. Set ${\delta x_k(s) \coloneqq \frac{\partial\gamma_k}{\partial s}}$ satisfying 
    \[
        \delta x_{k+1} = (A + B \mathcal{J}^u_k K) \delta x_k + w_k
    \]
    Define the differential coordinates ${\delta z_k(s) \coloneqq \Theta \delta x_k(s)}$ so that $\norm{\delta z_k(s)}$ is the differential line element in the Riemannian metric
    and the Riemannian distance is
    \begin{equation*}
        \label{eq:dt_distance_def}
        d_k = d(x_k, x_k^\star) = \min_{\gamma} \int_{0}^1 \norm{\delta z_k(s)} \dd{s}
    \end{equation*}
    In metric coordinates,
    \begin{equation*}
        \delta z_{k+1}
            = \underbrace{
                \Theta \frac{\partial \Phi}{\partial x}(x_k) \Theta^{-1}
            }_{F(x)}
            \delta z_k(s) + \Theta w_k.
    \end{equation*}
    By the contraction hypothesis, $\norm{F(x)} \leq \rho$. Hence,    
    \begin{align*}
        \norm{\delta z_{k+1}}
            \leq \rho \norm{\delta z_k} + \norm{\Theta} \norm{w_k}
            \leq \rho \norm{\delta z_k} + c_\Theta \norm{w_k} .
    \end{align*}
    Integrating over $s \in[0,1]$ 
    \begin{align*}
        d_{k+1} 
            &= \int_0^1 \norm{\delta z_{k+1}(s)} ds
            \leq \rho \int_0^1 \norm{\delta z_k(s)} ds + c_\Theta \norm{w_k} \\
            &= \rho d_k + c_\Theta \norm{w_k}
    \end{align*}
    gives the one-step bound on the Riemannian distance between $x_k$ and $x_k^\star$.
\end{proof}


\begin{thm}[Separation principle]
    Consider a discrete-time SSM \eqref{eq:SSM_nonlinear_model} that is both observable and controllable. The closed-loop system, which incorporates the state observer as defined in \eqref{eq:proposed_observer} and the state-feedback controller as defined in \Cref{thm:state_fbk} using the estimated state $\hat{x}_k$, exhibits exponential stability.
\end{thm}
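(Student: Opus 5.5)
The plan is a cascade argument. The observer error converges exponentially, and the state-feedback loop is input-to-state stable with respect to that error in the sense of \Cref{thm:dt_contraction_disturbance}. Combining the two bounds gives exponential stability of the interconnection.

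\textbf{Step 1: error dynamics.} Define the estimation error $e_k \coloneqq \hat{x}_k - x_k$. Under the controller $u_k = K\hat{x}_k$, the plant reads
\begin{equation*}
x_{k+1} = Ax_k + B\mathcal{S}_u(Kx_k) + w_k, \qquad w_k \coloneqq B\bigl(\mathcal{S}_u(Kx_k + Ke_k) - \mathcal{S}_u(Kx_k)\bigr).
\end{equation*}
Since $\mathcal{S}_u$ is $\nu_u$-Lipschitz, $\norm{w_k} \le \nu_u \norm{B}\norm{K}\norm{e_k}$. Because $u_k$ is a common known input to plant and observer, the observer error obeys
\begin{equation*}
e_{k+1} = Ae_k + L\bigl(\mathcal{S}_y(C\hat{x}_k + D\mathcal{S}_u(u_k)) - \mathcal{S}_y(Cx_k + D\mathcal{S}_u(u_k))\bigr).
\end{equation*}
This is independent of the particular input sequence. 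By \Cref{thm:observer_design}, and the mean-value/contraction argument in its proof, we have $\norm{e_k}_Q \le (1-\rho_o)^{k/2}\norm{e_0}_Q$. Hence $\norm{e_k} \le c_o \lambda_o^k \norm{e_0}$ with $\lambda_o = \sqrt{1-\rho_o} < 1$ and $c_o = \sqrt{\lambda_{\max}(Q)/\lambda_{\min}(Q)}$.

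\textbf{Step 2: ISS of the state loop.} By \Cref{thm:state_fbk}, the closed-loop map $\Phi(x) = Ax + B\mathcal{S}_u(Kx)$ satisfies $\frac{\partial \Phi}{\partial x}^\T P \frac{\partial \Phi}{\partial x} \preceq (1-\rho_c)P$ for all $x$. Choosing $\Theta = P^{1/2}$, which is constant so that $c_\Theta = \norm{P^{1/2}}$, gives $\norm{F} \le \rho \coloneqq \sqrt{1-\rho_c} < 1$. The hypotheses of \Cref{thm:dt_contraction_disturbance} therefore hold.

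Take the reference trajectory $x^\star_k \equiv 0$, which is a fixed point of $\Phi$ because $\mathcal{S}_u(0) = 0$. With a constant metric the Riemannian distance is $d_k = \norm{P^{1/2}x_k}$, and the lemma yields
\begin{equation*}
d_{k+1} \le \rho d_k + c_\Theta \nu_u \norm{B}\norm{K} c_o \lambda_o^k \norm{e_0}.
\end{equation*}

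\textbf{Step 3: combine the bounds.} Iterating gives
\begin{equation*}
d_k \le \rho^k d_0 + \bar{c}\,\norm{e_0}\sum_{j=0}^{k-1}\rho^{k-1-j}\lambda_o^j.
\end{equation*}
Let $\lambda = \max(\rho, \lambda_o)$. The convolution sum is bounded by $k\lambda^{k-1}$, so for any $\bar\lambda \in (\lambda, 1)$ it is at most $c'\bar\lambda^k$. Therefore
\begin{equation*}
\norm{(x_k, e_k)} \le c\,\bar\lambda^k \norm{(x_0, e_0)}.
\end{equation*}
This is global exponential stability of the origin in the $(x, e)$ coordinates, and equivalently in $(x, \hat{x})$ since $\hat{x} = x + e$ is a linear bijection.

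\textbf{Main obstacle.} The delicate step is turning the LMI certificate of \Cref{thm:observer_design} into an actual incremental bound on $e_k$ between the true trajectory and the observer trajectory. The proof of that theorem only sketches this part.

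I would argue as follows. The observer \eqref{eq:proposed_observer}, viewed as a system in $\hat{x}$ driven by the known signals $(u_k, y_k)$, has Jacobian $A + L\mathcal{J}^y_k C$. The LMI, after the substitution $V = UL$ and the $\Delta$-parametrization of $\mathcal{J}^y_k$ from \cite[Lemma 2]{hindi_ComputingOptimalUncertainty_2002}, gives $(A + L\mathcal{J}C)^\T Q (A + L\mathcal{J}C) \preceq (1-\rho_o)Q$ for every admissible Jacobian $\mathcal{J}$. By correctness, the true state $x_k$ is itself a trajectory of the observer. Integrating the differential bound along the straight line between $x_k$ and $\hat{x}_k$, which is a geodesic for the constant metric $Q$, then gives the contraction of $e_k$ used in Step 1.

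One subtlety needs care here. The integrated Jacobian $\int_0^1 \mathcal{J}^y(s)\,ds$ must still lie in the bi-Lipschitz set. This holds because the set $\{\mathcal{J} : \bar\sigma(\mathcal{J} - \alpha_y I) \le \beta_y\}$ used in the robust LMI is convex. The same convexity remark justifies replacing $w_k$ by a Lipschitz bound in Step 1.
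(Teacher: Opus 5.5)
Your proposal is correct and takes the same route as the paper's proof: exponential decay of the observer error, then \Cref{thm:dt_contraction_disturbance} with $w_k = B(\mathcal{S}_u(K\hat{x}_k) - \mathcal{S}_u(Kx_k))$ and a geometric convolution sum; you also make explicit what the paper leaves vague, namely the reference $x_k^\star \equiv 0$, the incremental bound on $e_k$ extracted from the observer LMI, and a final bound linear in $\norm{(x_0,e_0)}$. One small fix: the congruence with $T = [\,I,\ -A_{\text{cl},k}\,]$ in the proof of \Cref{thm:state_fbk} actually yields $A_{\text{cl},k} P A_{\text{cl},k}^\T \prec (1-\rho_c)P$ rather than \eqref{eq:LMI_closed_loop} as written, i.e.\ contraction in the metric $P^{-1}$, so you should take $\Theta = P^{-1/2}$ instead of $P^{1/2}$ (the observer LMI does certify the metric $Q$, as you state), and nothing else in your argument changes.
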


\begin{proof}
    The closed-loop can be written as 
    \begin{align*}
        x_{k+1} 
            &= A x_k + B \mathcal{S}_u(u_k)
    \\
        \hat{x}_{k+1} 
            &= A \hat{x}_k + B \mathcal{S}_u(u_k) + L ( \hat{y}_k - y_k )
    \\
        y_k &= \mathcal{S}_y(C x_k + D \mathcal{S}_u(u_k))
    \\
        \hat{y}_k 
            &= \mathcal{S}_y(C \hat{x}_k + D \mathcal{S}_u(u_k))
    \\
        u_k &= K \hat{x}_k
    \end{align*}
    where $K$ and $L$ denote the state-feedback and observer gains, as defined in \Cref{thm:state_fbk} and \Cref{thm:observer_design}, respectively. 

    From \Cref{thm:observer_design}, the state estimates $\hat{x}_k$ converge exponentially fast to the true $x_k$. Furthermore, the smoothness of $K$ implies that ${(K\hat {x}_k - K x_k)}$ is bounded and converges to zero asymptotically and since $\mathcal{S}_u(0) = 0$, and $\mathcal{J}_u$ is uniformly bounded, ${\mathcal{S}_u(K \hat{x}_k) - \mathcal{S}_u(K x_k) \rightarrow 0}$, exponentially. So, there exist $\Theta$ and some ${\rho_o \in (0,1)}$, ${c > 0}$ such that,
    \[
        \norm{
            \Theta B \ab(
                \mathcal{S}_u(K \hat{x}_k) 
                - \mathcal{S}_u(K x_k) 
            )
        } \leq c \rho_o^k
    \]
    Now using \Cref{thm:dt_contraction_disturbance} with $
        w_k = B \ab(
            \mathcal{S}_u(K \hat{x}_k) 
            - \mathcal{S}_u(K x_k) 
        )
    $
    and some constant $\rho_c \in (0, 1)$, it follows that
    \[
        d_k \leq 
            \rho_c^k d_0
            + c \sum_{i=0}^{k-1}{\rho_c^{k-1-i} \rho_o^i} 
    \]
    implying that $d_k \to 0$ at an exponential rate, and the uniform boundedness of the metric $\Theta$ guarantees that $x_k$ converges to the desired trajectory $x_k^*$ at an exponential rate.
\end{proof}


\section{Numerical Example}
\label{sec:example}

We consider the nonlinear DC motor model proposed in \cite{kara2004nonlinear}, which captures the key nonlinearities, such as the input dead-zone and nonlinear friction in the drive train. This model is used as the benchmark plant for data collection, SysId, controller synthesis, and observer design. It comprises electrical and mechanical subsystems subject to nonlinearities. 

\paragraph*{Electrical Subsystem} 
The armature voltage dynamics of the DC motor are governed by
\begin{equation*}
    \label{eq:armature_voltage}
    v_a(t) = R_a i_a(t) + L_a \diff{i_a(t)}{t} + e_a(t) 
\end{equation*}
where 
    $v_a(t)$ is the motor armature voltage, 
    $R_a$ and $L_a$ are the armature coil resistance and inductance, 
    $i_a(t)$ is the armature current, and 
    ${ e_a(t) = K_m \omega_m(t) }$ is the back electromotive force (EMF) with $K_m$ being the motor torque constant and $\omega_m(t)$ the motor angular velocity. The motor torque is linearly related to the current as $T_m(t) = K_m i_a(t)$.


\paragraph*{Mechanical Subsystem} 
The mechanical part of the system is modelled as a two-mass drive with elastic coupling between the motor and the load, given by
\begin{align*}
    J_m \dot{\omega}_m(t) &= T_m(t) - T_s(t) - B_m \omega_m(t) - T_f(\omega_m) \\
    J_L \dot{\omega}_L(t) &= T_s(t) - B_L \omega_L(t) - T_d(t) - T_f(\omega_L)
\end{align*}
where $J_m$ and $J_L$ are the moment of inertia of the motor and load, $B_m$ and $B_L$ are viscous friction coefficients, and $T_d(t)$ is the external disturbance torque. The coupling torque $T_s(t)$ between motor and load is modelled as
\begin{equation*}
    T_s(t) 
        = k_s \ab( \theta_m(t) - \theta_L(t) ) 
        + B_s \ab( \omega_m(t) - \omega_L(t) )
\end{equation*}
where $k_s$ and $B_s$ are the shaft stiffness and damping coefficients. Moreover, $\dot{\theta}_m = \omega_m$, and $\dot{\theta}_L = \omega_L$, respectively. 


\paragraph*{Nonlinear Friction and Dead-zone Effects}
The friction torque is modelled using the Coulomb characteristic,
\begin{equation*}
    \label{eq:friction}
    T_f(\omega) = a_0 \operatorname{sgn}(b_0 \omega),
\end{equation*}
where $a_0$ and $b_0$ are friction parameters. The dead-zone nonlinearity in the input voltage is modelled as
\begin{equation*}
    u_{\text{eff}}(t) =
    \begin{cases}
        0,                                                      & \ab|v_a(t)| <    v_{\text{dz}} \\
        v_a(t) - \operatorname{sgn}(v_a(t)) \, v_{\text{dz}},   & \ab|v_a(t)| \geq v_{\text{dz}}
    \end{cases}
    \label{eq:deadzone}
\end{equation*}
where $v_{\text{dz}}$ denotes the dead-zone threshold voltage. The $\operatorname{sgn}(\cdot)$ is the sign function.
The parameters used in this work are summarised in \Cref{tab:params}.


\begin{table}[tb]
    \centering
    \caption{DC Motor Parameters}
    \label{tab:params}
    \begin{tabular}{@{} l c S[table-alignment-mode = marker, table-format = 2.2e1] l @{}}
        \toprule
            \thead{Parameter}
                & \thead{Symbol}    
                & \multicolumn{2}{c}{\thead{Value}} \\
        \midrule
            Armature resistance & $R_a$             & 3.3       & \unit{\ohm}               \\
            Armature inductance & $L_a$             & 2.75      & \unit{\milli\henry}       \\
            Motor constant      & $K_m$             & 3.24e-2   & \unit{\N\m\per\A}         \\
            Motor inertia       & $J_m$             & 1.16e-4   & \unit{\kg\m\squared}      \\
            Load inertia        & $J_L$             & 4.0e-4    & \unit{\kg\m\squared}      \\
            Shaft stiffness     & $k_s$             & 1.35      & \unit{\N\m\per\radian}    \\
            Viscous friction    & $B_m$, $B_L$      & 1.0e-4    & \unit{\N\m\s\per\radian}  \\
            Dead-zone threshold & $v_{\text{dz}}$   & 0.4       & \unit{\volt}              \\
        \bottomrule
    \end{tabular}
\end{table}


For training the SSM, several trajectories are first gathered by exciting the model with a standard PRBS signal. The measured output is corrupted by zero-mean Gaussian noise with variance $0.02$.
The architecture depicted in \cref{fig:SSM_architecture} is used with a NN consisting of a single hidden layer with $32$ neurons and leaky ReLU as the activation function, denoted by $\mathcal{S}_u$, and a linear layer, denoted by $\mathcal{S}_y$. The slope of the leaky ReLU is constrained between $0.01$ and $1$. Note that the bi-Lipschitz bound can be computed after training using the spectral norms of the weight matrices and the slope of the activation function. The system order for the linear recurrent unit is chosen as ${n_x = 8}$, ${n_{\bar{u}} = 1}$, and ${n_{\bar{y}} = 1}$. \Cref{fig:validation_traj} shows a validation trajectory generated using the trained SSM. 

\begin{figure}[tb]
    \centering
    \includegraphics[width=0.9\linewidth]{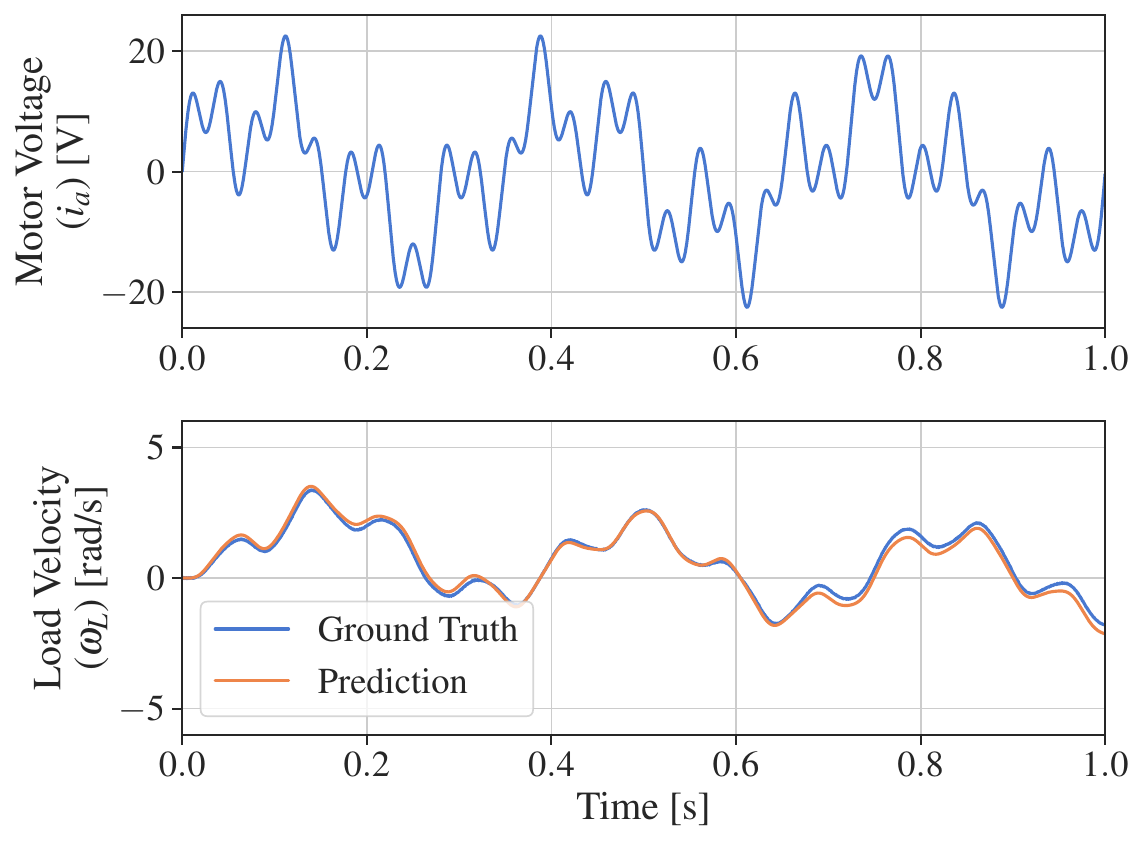}
    \caption{A sample validation trajectory after the training}
    \label{fig:validation_traj}
\end{figure}

For the controller synthesis, the Python Control package \cite{python-control2021} is used to solve the LMIs presented in \cref{thm:state_fbk,thm:observer_design}. The controller and observer pair are implemented on the original nonlinear mathematical model for ten different initial conditions sampled from a uniform distribution between \SIlist{-4; 4}{\radian\per\second}. The results are presented in \cref{fig:response_plant}. The controller and observer demonstrate robustness and stability on the nonlinear mathematical model.
\begin{figure}
    \centering
    \includegraphics[width=0.9\linewidth]{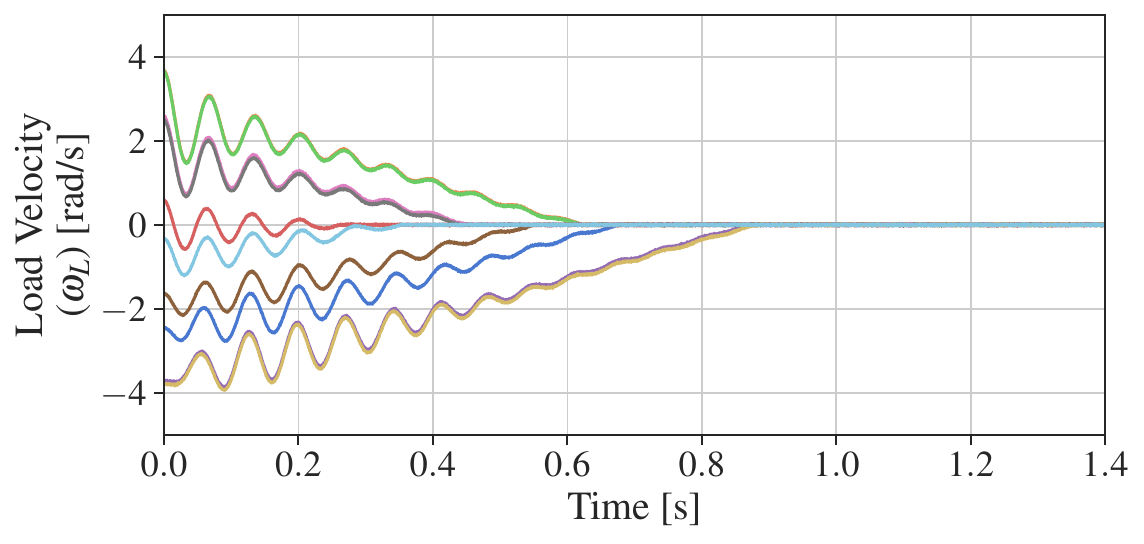}
    \caption{Controller response on the nonlinear mathematical model for different initial conditions}
    \label{fig:response_plant}
\end{figure}

\section{Conclusions}
\label{sec:conclusion}

This paper presents an indirect data-driven controller synthesis for nonlinear systems. First, a nonlinear SysId using an SSM is performed on the system's input-output data, followed by controller synthesis for the learned model. Sufficient controllability and observability conditions for SSMs are established, showing that bi-Lipschitzness of both input lifting and output projection blocks is a sufficient requirement. Key results include
\begin{enumerate*}[label = (\roman*), before=\unskip{: }, itemjoin={{, }}, itemjoin*={{, and }}]
    \item an LMI-based state-feedback controller ensuring exponential stability
    \item a state-observer design guaranteeing asymptotic convergence
    \item a discrete-time separation principle using contraction theory.
\end{enumerate*}
Future work will focus on integrating the internal model principle to enhance robustness and performance.

\bibliographystyle{ieeetr}
\bibliography{references}

\end{document}